\documentclass{ifacconf}
\usepackage{natbib}   
\usepackage{graphicx}
\usepackage{textcomp}
\usepackage{amsmath,amssymb,bm,bbm,mathrsfs,amscd}
\usepackage{calc}
\usepackage{xcolor}
\usepackage{tikz}
\usetikzlibrary{arrows.meta, positioning}
\usepackage{dsfont}
\usepackage{epstopdf}
\usepackage{amsfonts}
\usepackage{rotating}
\usepackage{mathtools}
\usepackage{subfig}
\usepackage{enumerate,pgfplots}
\pgfplotsset{compat=1.18}

\newcommand{\ba}{\begin{array}}
\newcommand{\ea}{\end{array}}

\newcommand{\be}{\begin{equation}}
\newcommand{\ee}{\end{equation}}

\newcommand{\ds}{\displaystyle}

\newcommand{\mc}{\mathcal}

\newcommand{\R}{\mathbb{R}}

\def\R{\mathbb{R}}

\def\diag{{\rm diag}\,}
\begin{document}
\begin{frontmatter}
\title{On a  Co-evolving Opinion-Leadership Model in Social Networks} 

\thanks[footnoteinfo]{This work was partially supported by the Wallenberg AI, Autonomous Systems and Software Program (WASP) funded by the Knut and Alice Wallenberg Foundation.}

\author[KTH]{Martina Alutto} 
\author[Polito]{Lorenzo Zino} 
\author[KTH]{Karl H. Johansson}
\author[KTH]{Angela Fontan}

\address[KTH]{Department of Decision and Control Systems, School of Electrical Engineering and Computer Science, \\KTH Royal Institute of Technology, Stockholm, Sweden \\(e-mails: \{alutto; angfon; kallej\}@kth.se)}
\address[Polito]{Department of Electronics and Telecommunications, \\Politecnico di Torino, Turin,
Italy (email: lorenzo.zino@polito.it)}

\begin{abstract}
    Leadership in social groups is often a dynamic characteristic that emerges from interactions and opinion exchange. Empirical evidence suggests that individuals with strong opinions tend to gain influence, at the same time maintaining alignment with the social context is crucial for sustained leadership. Motivated by the social psychology literature that supports these empirical observations, we propose a novel dynamical system in which opinions and leadership co-evolve within a social network. Our model extends the Friedkin--Johnsen framework by making susceptibility to peer influence time-dependent, turning it into the leadership variable. Leadership strengthens when an agent holds strong yet socially aligned opinions, and declines when such alignment is lost, capturing the trade-off between conviction and social acceptance. After illustrating the emergent behavior of this complex system, we formally analyze the coupled dynamics, establishing sufficient conditions for convergence to a non-trivial equilibrium, and examining two time-scale separation regimes reflecting scenarios where opinion and leadership evolve at different speeds.
\end{abstract}

\begin{keyword}
 Social networks and opinion dynamics
\end{keyword}

\end{frontmatter}

\section{Introduction}
Leadership in social groups frequently emerges endogenously from interpersonal interactions rather than being externally imposed~\citep{Nakayama2019}. Across social psychology and cognitive science, leadership is a social role that emerges when people's views strongly align with those of their peers and when they are seen as important and influential in their relational environment \citep{Surowiecki2004,isenberg1986group}. People with strong or extreme opinions frequently become more prominent in group discussions because their arguments are more confidently and cogently presented, which helps them influence group decisions and judgments \citep{VanSwol01072009}. Experimental evidence supports this phenomenon, showing that individuals with strong opinions are typically less susceptible to social influence, which manifests as persistent commitment to their views and increased perceived leadership \citep{brandt2015unthinking}. However, extremity alone does not guarantee long-term influence. Recent research suggests that leadership is a dynamic attribute shaped by ongoing relational adaptation, as even strongly polarized individuals may gradually converge towards moderate positions when exposed to persistent social pressure \citep{klein2023respondents}. More generally, people tend to reduce the dissonance between their own beliefs and those of their peers due to the enduring human need for social validation, rooted in relational consistency and epistemic certainty \citep{hillman2023social}. In summary, a fundamental mechanism for comprehending the emergence and evolution of leadership in complex social systems is the conflict between convictions and conformity in social communities, within an opinion formation and sharing process.

Opinion dynamics models provide a mathematical framework for studying belief updating driven by social interactions. Foundational contributions, such as the French--DeGroot model \citep{DeGroot1974} and the Friedkin--Johnsen model \citep{Friedkin1990}, describe belief evolution through weighted averaging processes capturing consensus formation and stable disagreement. More recently, the incorporation of reflected appraisal mechanisms has enabled the endogenous evolution of social power. According to this theory, individuals adjust their influence based on how much their expressed opinions affect those of their neighbors \citep{friedkin2011formal}. This has fueled a growing research stream on the co-evolution of opinions and influence in networked societies \citep{jia2015opinion,mirtabatabaei2014reflected,ohlin2022achieving,tian2021social,wang2021achieving,wang2022social,wang2024social,Kang2022,Liu2024}. These contributions show that influence is not static, but shaped dynamically through feedback loops linking susceptibility, opinion centrality, and network structure.
While these works deepen the understanding of influence formation, the emergence of leadership has often been modeled by assuming exogenous (and often constant) differences in stubbornness or authority. In contrast, empirical literature suggests that leadership is intrinsically tied to and dynamically shaped by the interplay between opinion strength and social validation \citep{VanSwol01072009, berglund2006noise}. Mathematical models capable of endogenously capturing such complex dynamics in an analytically-tractable formalism remain limited, calling for the development of new modeling approaches to account for this important mechanism.

To address this gap, we propose a novel dynamical framework consisting in a coupled system of nonlinear ordinary differential equations (ODEs), in which opinions and leadership co-evolve within a social network. In our formulation, an individual's leadership is shaped by two contrasting mechanisms, constantan with the social psychology literature discussed in the above. In particular, leadership rises when the individual holds strong opinions, and it declines when attitudes become excessively misaligned with those of their reference group (e.g., friends, colleagues, or online communities). 
Our approach can be interpreted as a modification of the Taylor's model \citep{taylor1968towards}, the continuous time counterpart of the Friedkin–Johnsen model ~\citep{Friedkin1990}, in which the susceptibility to peer influence is no longer static but evolves over time and becomes the variable representing leadership. Indeed, unlike existing approaches that prescribe influential individuals a priori, we represent leadership as a state variable that strengthens when an agent holds a strong opinion that remains aligned with the prevalent social context, and weakens when such alignment is lost. This mechanism captures the fundamental trade-off between conviction and social acceptance that governs informal leadership formation. 
 
The main contribution of this paper, besides the formalization of such a modeling framework, is threefold. First, we present some case studies on small networks to illustrate the model and discuss its main properties. Second, we provide a formal characterization of the coupled dynamics and show sufficient conditions ensuring convergence to a non-trivial equilibrium, using the center manifold method and the Banach fixed point theorem. Third, we examine two time-scale separation regimes, where opinion updating and leadership evolution occur at different rates, capturing scenarios commonly observed in social systems. For these scenarios, we first characterize the dynamics of the faster variable, and then we establish sufficient conditions guaranteeing convergence of the slower variable to a unique equilibrium. Numerical simulations illustrate the results and system behavior.

The rest of the paper is organized as follows. Section~\ref{sec:model} introduces the opinion-leadership model, while Section~\ref{sec:examples} provides intuition on its dynamics through illustrative examples. Section~\ref{sec:stability} presents the analysis of equilibrium points and their stability, while Section~\ref{sec:timescale} discusses the time-scale separation analysis. Finally, Section~\ref{sec:conclusion} concludes the paper and outlines directions for future research.

\section{Model Description}\label{sec:model}
\subsection{Notation}
We denote by $\R$ and $\R_{+}$ the sets of real and nonnegative real numbers, respectively, while $\R_{+}^{n \times n}$ indicates the set of real matrices with dimension $n \times n$ and nonnegative entries. 
The all-1 vector and the all-0 vector are denoted by $\boldsymbol{1}$ and $\boldsymbol{0}$ respectively. The identity matrix and the all-0 matrix are denoted by $I$ and $\mathbb{O}$, respectively.
The transpose of a matrix $A$ is denoted by $A^T$. 
For $x$ in $\R^n$, let $||x||_1=\sum_i|x_i|$ and $||x||_{\infty}=\max_i|x_i|$ be its $l_1$- and $l_{\infty}$- norms, while $\mathrm{diag}(x)$ denotes the diagonal matrix whose diagonal coincides with $x$. 
For an irreducible matrix $A$ in $\R_+^{n\times n}$, we let $\rho(A)$ denote the spectral radius of $A$. 
Inequalities between two vectors $x$ and $y$ in $\R^n$ are meant to hold true entry-wise, i.e., $x \le y$ means that $x_i\le y_i$ for every $i$, whereas $x< y$ means that $x_i< y_i$ for every $i$, and $x\lneq y$ means that $x_i\le  y_i$ for every $i$ and $x_j<y_j$ for some $j$. Analogous holds for the matrix. Unless differently specified, we use the notation $\sum_j$ to denote $\sum_{j=1}^n$.

We consider a social network composed of $n$ agents. The interactions among agents are represented by a matrix $W \in {\mathbb{R_+}}^{n \times n}$, which is assumed to be row-stochastic (i.e., $\sum_j W_{ij} = 1$, for all $i$), so that $W_{ij}$ quantifies the weight that agent $i$ assigns to the opinion of agent $j$.  
Each agent $i \in \{1, \dots, n\}$ is characterized by two dynamical variables:  
\begin{itemize}
	\item An opinion variable $x_i(t) \in \mathbb{R}$, representing the belief of agent $i$ on a certain topic;  
	\item A leadership variable $y_i(t) \in [0,1]$, representing the degree of leadership of agent $i$, 
    with larger values of $y_i(t)$ denoting that agent $i$ has a larger tendency to lead the group at time $t$. 
\end{itemize}

The joint evolution of opinions and leadership is governed by the following system of ODEs:
\be \label{eq:leader-model}
\begin{cases}
	\dot{x}_i = \ds - \Big( x_i - \sum\nolimits_{j} W_{ij} x_j \Big)(1 - y_i) - y_i\big(x_i - x_i(0)\big) \\
	\dot{y}_i = \ds \alpha (1 - y_i) x_i^2 - \beta y_i \Big( x_i - \sum\nolimits_{j} W_{ij} x_j \Big)^2,
\end{cases}
\ee
for all $i\in\{ 1,\dots,n\}$, where $\alpha, \beta > 0$ are scalar parameters.

Before analyzing the dynamics, we present a brief discussion to elucidate its mechanisms. From \eqref{eq:leader-model}, we observe that the opinion dynamics is driven by two main mechanisms. The first term, weighted by $(1 - y_i)$, captures the tendency of followers to conform to the average opinion of their neighbors, which is a well-known phenomenon in social psychology~\citep{Asch1952}, and it is the key mechanism in the classical French--DeGroot opinion dynamic model~\citep{DeGroot1974}. In contrast, the second term reflects the stubbornness of leaders, who resist changes and tend to remain consistent with their initial opinion $x_i(0)$. In other words, for a fixed leadership value $y_i$, the first equation in \eqref{eq:leader-model} can be interpreted as a continuous-time version of the classical Friedkin--Johnsen model ~\citep{Friedkin1990, taylor1968towards}.
Instead, in this model, the agents' leadership co-evolve with their opinion, shaped by two opposing mechanisms: the first term accounts for the tendency of agents with strong opinions to emerge as leaders, as supported by the social psychology literature~\citep{VanSwol01072009,brandt2015unthinking}. Here, the parameter $\alpha$ can be interpreted as a \emph{leadership growth rate}, controlling how quickly strong opinions translate into increased leadership. Conversely, the second term represents the loss of leadership due to social misalignment: agents whose opinions deviate significantly from those of their neighbors lose influence, as suggested by the social verification theory~\citep{hillman2023social}. The parameter $\beta$ therefore measures the \emph{sensitivity of leadership to disagreement}, indicating how strongly misalignment penalizes an agent’s leadership. We refer to system~\eqref{eq:leader-model} as the \emph{opinion-leadership model}.

If we conveniently gather the opinion and leadership of all agents in two vectors $x=[x_1 \cdots x_n]^\top$ and  $y=[y_1 \cdots y_n]^\top$, respectively, the system in \eqref{eq:leader-model} can be rewritten compactly as:
\be \label{eq:leader-model-compact}
\begin{cases}
	\dot{x} = - (I - Y) (I - W) x - Y (x - x(0)), \\
	\dot{y} = \alpha\, (I - Y) \mathrm{diag}(x)\, x - \beta\, Y\mathrm{diag}(( I - W)x)\, ( I - W)x,
\end{cases}
\ee
where $Y=\mathrm{diag}(y)$ is the diagonal matrix whose diagonal entries are the leadership variables of the agents.

The next result establishes well-posedness of the model by showing that the set
$$ \Delta := [-1,1]^n \times [0,1]^n$$
is positively invariant for the opinion-leadership coevolution model, i.e., if $(x(0),y(0))$ is in $\Delta$, then $(x(t), y(t))$ will be in $\Delta$ for all $t \geq 0$.
\begin{lem}
	For any initial condition $(x(0),y(0)) \in \Delta$ there exists a unique solution $(x(t),y(t))$ of \eqref{eq:leader-model} defined for all $t \ge 0$. Moreover, $\Delta$ is positively invariant for \eqref{eq:leader-model}. 
\end{lem}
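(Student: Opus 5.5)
The plan has two parts: first local existence and uniqueness from smoothness of the vector field, then forward invariance of the compact box $\Delta$, which also upgrades local existence to global existence.

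\textbf{Local existence and uniqueness.} Fix the initial opinion $x(0)=x^0$ and regard it as a constant parameter. The right-hand side of \eqref{eq:leader-model} is then a polynomial in $(x,y)$, hence $C^\infty$ and locally Lipschitz on $\R^n\times\R^n$. The Picard--Lindel\"of theorem gives a unique maximal solution on some interval $[0,T_{\max})$. Moreover, if the solution stays in a compact set, then $T_{\max}=\infty$. So it suffices to show that solutions starting in $\Delta$ remain in $\Delta$.

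\textbf{Invariance of $\Delta$.} I will use the standard tangency (Nagumo) criterion for the closed convex box $\Delta$: at every boundary point, each component of the vector field must point inward or be tangent on the face it lies on. The computation splits into the $y$-faces and the $x$-faces.

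\emph{The $y$-faces.} If $y_i=0$, then $\dot y_i=\alpha x_i^2\ge 0$. If $y_i=1$, then $\dot y_i=-\beta\,(x_i-\sum_j W_{ij}x_j)^2\le 0$.

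\emph{The $x$-faces.} Write
\[
\dot x_i=(1-y_i)\Big(\sum\nolimits_j W_{ij}x_j - x_i\Big)+y_i\big(x_i^0-x_i\big).
\]
Suppose $x_i=1$ with $x\in[-1,1]^n$ and $y_i\in[0,1]$. Since $W$ is row-stochastic and nonnegative, $\sum_j W_{ij}x_j\le 1=x_i$. Also $x_i^0\le 1$. Both brackets are therefore $\le 0$, and both weights $1-y_i$ and $y_i$ are nonnegative, so $\dot x_i\le 0$. The face $x_i=-1$ is symmetric and gives $\dot x_i\ge 0$.

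\textbf{Main obstacle.} These weak inequalities are not enough by themselves when the field is merely tangent on a face. I would handle this in one of two ways. The first is to invoke Nagumo's theorem, which for a locally Lipschitz field on a closed convex set needs only the tangent-cone condition verified above. The second, more self-contained, is a perturbation argument: add $\varepsilon$ times a strictly inward field (for instance $-\varepsilon x$ in the $x$-components and $\varepsilon(\tfrac12\boldsymbol 1-y)$ in the $y$-components). The perturbed field points strictly inward on $\partial\Delta$, so a first-exit-time contradiction shows its solutions cannot leave $\Delta$. Letting $\varepsilon\to0$ and using continuous dependence on parameters on compact time intervals then gives invariance of $\Delta$ for the original system.

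Together, invariance of the compact set $\Delta$ and local Lipschitz continuity give $T_{\max}=\infty$, which completes the lemma.
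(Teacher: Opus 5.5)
Your proposal is correct and follows essentially the same route as the paper. Both arguments use Picard--Lindel\"of for local well-posedness, then Nagumo's theorem on the compact convex box $\Delta$ after checking the sign of each component of the vector field on the corresponding faces; you are slightly more explicit, verifying the $x$-face signs via row-stochasticity of $W$ and $x(0)\in[-1,1]^n$, and noting that invariance of the compact set $\Delta$ extends the local solution to all $t\ge 0$.
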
 
\begin{pf}
	Since the vector field of \eqref{eq:leader-model} is Lipschitz with respect to $(x,y)$, local existence and uniqueness for the Cauchy problem are guaranteed by the Picard–Lindelöf theorem \citep{hale}.  
	The domain $\Delta$ is compact and convex, hence Nagumo’s theorem can be applied \citep[Theorem 3.1]{blanchini1999set}. To verify invariance, we check the direction of the vector field at the boundaries of $\Delta$. We observe that, for all $i=1,\dots,n$, $\dot x_i \leq 0$ if $x_i = 1$ and $\dot x_i \geq 0$ if $x_i = -1$. Similarly, $\dot y_i \leq 0$ if $y_i = 1$ and $\dot y_i \geq 0$ if $y_i = 0$. 
	It follows that any solution $(x(t),y(t))$ with $(x(0),y(0))\in \Delta$ remains in $\Delta$ for all $t \geq 0$, which implies that $\Delta$ is a positively invariant set  for \eqref{eq:leader-model}. \qed
\end{pf}


\section{Illustrative Examples}\label{sec:examples}
In this section, we analyze different scenarios to highlight the emergent phenomena that the model can capture and reproduce.

\subsection{Three-node network}
We consider the opinion-leadership model~\eqref{eq:leader-model} in the case of a network formed by $3$ interacting agents, starting with polarized opposite opinions for agents $1$ and $3$ ($x_1(0) = -1$ and $x_3(0)=1$), and neutral opinion for agent $2$ ($x_2(0)=0$), which we also denote as the central node in what follows. We assume a symmetric interaction matrix of the form
$$W_{ij}=\left\{\begin{array}{cl}a&\text{if }i=j,\\
\frac{1-a}{2}&\text{if }i\neq j,
\end{array}\right.$$
where the constant $a\in(0,1)$ captures the relative weight of the self-loop with respect to the other entries.

Due to the symmetry of $W$, it can be shown (proof and numerical simulations are here omitted for lack of space) that the influences of the two initially polarized agents acting on node $2$ are perfectly balanced.  
As a result, the central agent remains fixed at its initial neutral opinion, i.e., $x_2^* = 0$. 
Interestingly, we further observe that the symmetry implies that the leadership of the two polarized agents, even if initially different, eventually converges to the same value (see Fig.~\ref{fig:network3-sim_a} for a numerical simulation). This value is larger than the one of the neutral agent, due to the first term in the leadership dynamics that captures the real-world emergence of leaders among those with stronger opinions. 

\begin{figure}
    \subfloat[]{\hspace{-3pt}\includegraphics[width=0.52\linewidth]{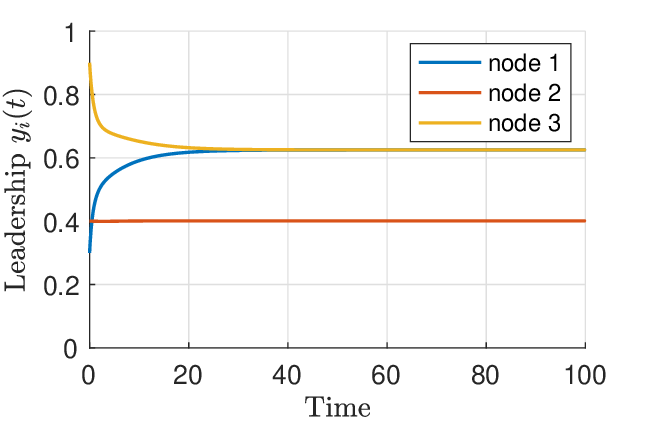}\label{fig:network3-sim_a}}
    \subfloat[]{\hspace{-6pt}\includegraphics[width=0.52\linewidth]{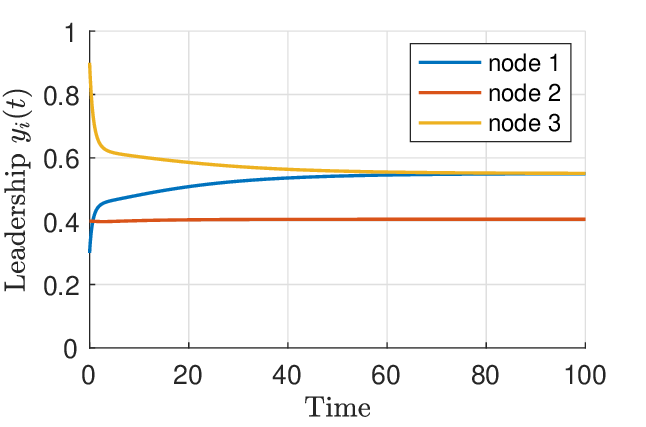}\label{fig:network3-sim_b}}\\    
    \subfloat[]{\hspace{-3pt}\includegraphics[width=0.52\linewidth]{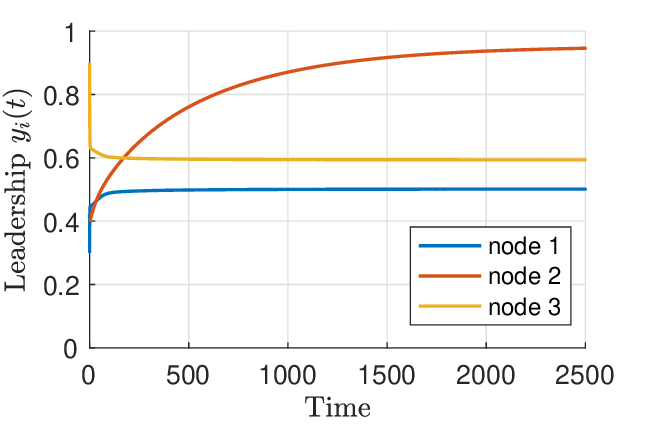}\label{fig:network3-sim_c}}
    \subfloat[]{\hspace{-6pt}\includegraphics[width=0.52\linewidth]{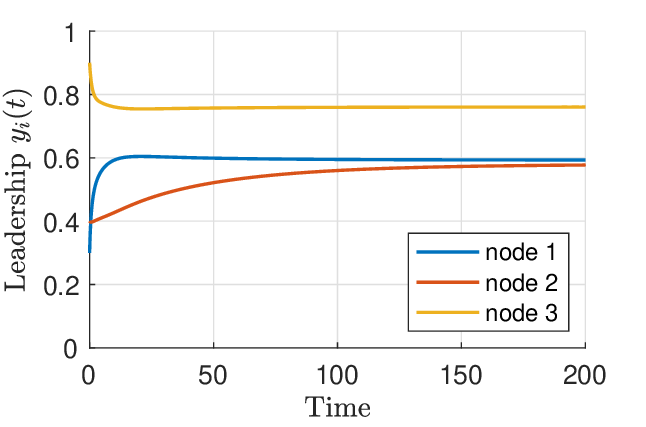}\label{fig:network3-sim_d}}
    \caption{Numerical simulation of the three-agent opinion-leadership model with $\alpha = 0.6$ and $\beta =1$. (a) Symmetric case: agent $2$ remains neutral while agents $1$ and $3$ converge to moderated polarized opinions. (b)–(d) Perturbed scenarios showing how asymmetry in the opinion initialization or interaction strengths breaks polarization symmetry and modifies leadership emergence. }
    \label{fig:network3-sim}
\end{figure}

We now investigate perturbations of this symmetric setting.
First, we consider a weaker mutual influence between nodes $1$ and $3$, i.e., we reduce the entries $W_{13} = W_{31} = c < \frac{1-a}{2}$.
In this case, illustrated in Fig.~\ref{fig:network3-sim_b}, the leadership levels of agents $1$ and $3$ become slightly smaller. With reduced mutual reinforcement, agents polarized agents experience less opinion confirmation and thus exhibit reduced leadership.

Second, we assume that the central node has a small, but non-zero, initial condition, i.e., $x_2(0) = 0.05$. In this scenario, depicted in Fig.~\ref{fig:network3-sim_c}, the leadership of nodes $1$ and $3$ still converge at a similar rate toward equilibrium but no longer symmetrically.
Hence, the agent with positive initial opinion (node $3$) eventually emerges as a stronger leader than agent $1$. Interestingly, we observe that agent $2$ exhibits a much slower convergence because their initial state is close to $0$ and due to the symmetric of matrix $W$, the disagreement term driving the leadership dynamics is small in magnitude. This leads to a gradual but steady leadership update, eventually becoming dominant in the social network. 

Third, we consider a case where node $2$ has a negative initial opinion as $x_2(0)=-0.05$ and a stronger influence from node $1$, i.e., let $W_{12} > W_{32}$.
In this case, as shown in Fig.~\ref{fig:network3-sim_d}, the absence of symmetries leads to a more complex emergent behavior, whereby agent $1$ receives a stronger reinforcing influence from the central node, resulting in a larger leadership compared to agent $3$.

\subsection{Impact of model parameters}

\begin{figure*}
    \centering
    \subfloat[Opinions, $\beta=0.1$]{\includegraphics[width=0.25\linewidth]{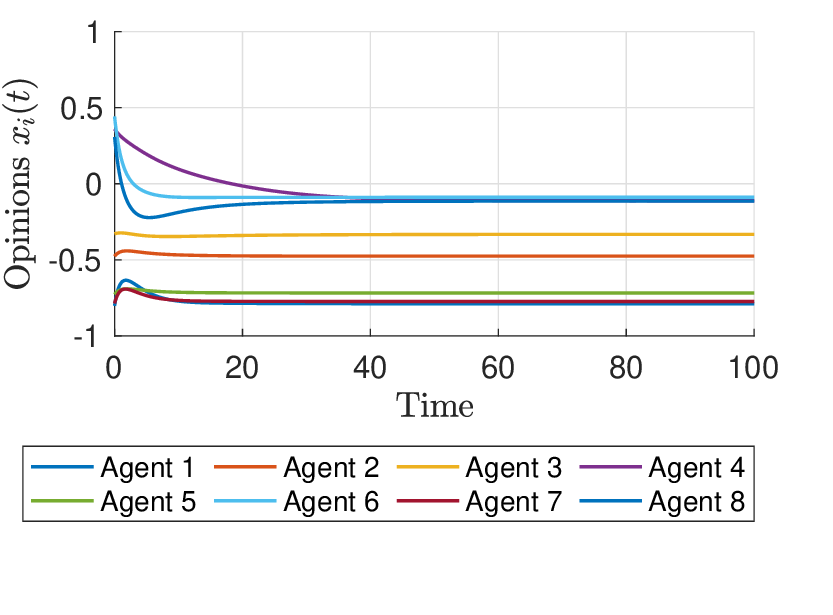}\label{fig:examplea}}
    \subfloat[Leadership, $\beta=0.1$]{\hspace{-6pt}\includegraphics[width=0.25\linewidth]{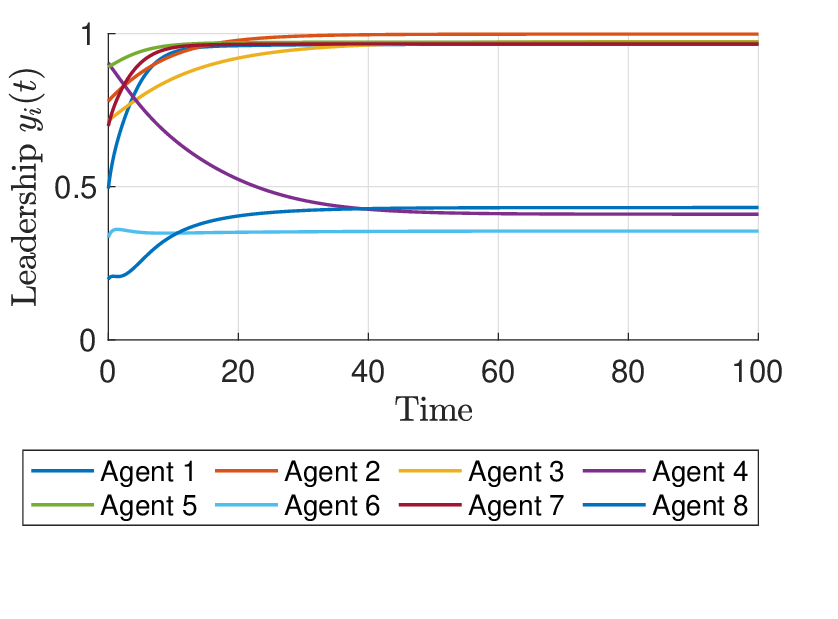}\label{fig:exampleb}}
      \subfloat[Opinions, $\beta=1$]{\includegraphics[width=0.25\linewidth]{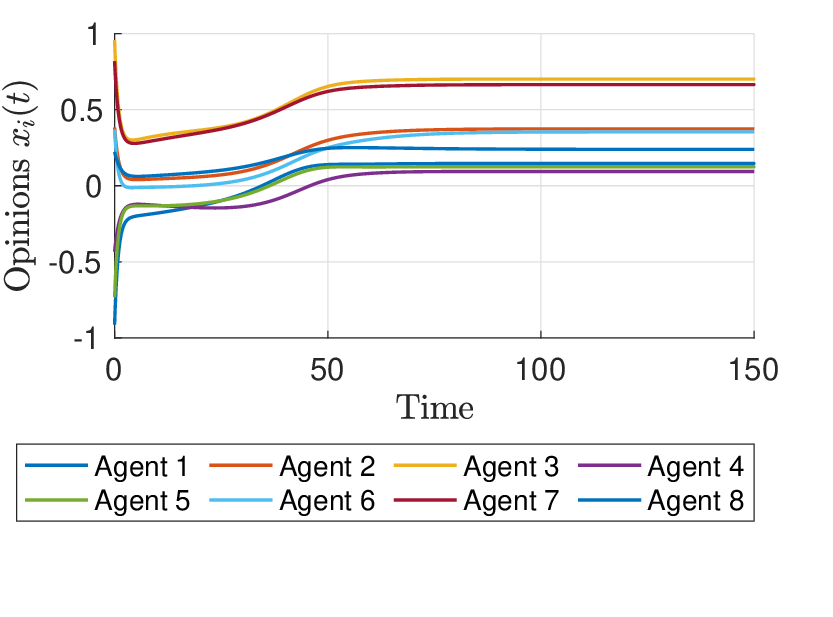}\label{fig:examplec}}
    \subfloat[Leadership, $\beta=1$]{\hspace{-6pt}\includegraphics[width=0.25\linewidth]{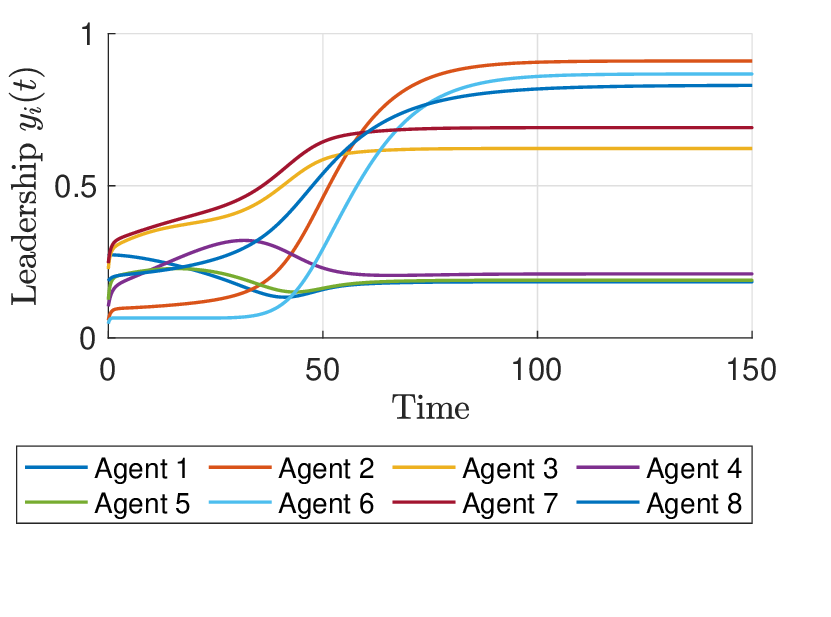}\label{fig:exampled}}
    \caption{Simulation of the opinion-leadership model~\eqref{eq:leader-model} with $8$ agents with $\alpha = 0.6$ and (a,b) $\beta = 0.1$ or (c,d) $\beta = 1$. }
    \label{fig:example1}
\end{figure*}

We consider now a population of $8$ interacting agents, and we explore the impact of the model parameters. In particular, we fix $\alpha=0.6$ and we consider two scenarios, with $\beta=0.1$ and $\beta=1$, respectively. 

A simulation of the opinion-leadership model in~\eqref{eq:leader-model} in the first scenario is illustrated in Figs.~\ref{fig:examplea}--\ref{fig:exampleb}. In this case, we observe that the order of the opinions is preserved throughout the evolution. In particular, agent 1 maintains the strongest negative opinion, while agent 6 remains the one with the strongest positive opinion. Agent 1 achieves a strong leadership that forces the positive opinion of agents 4,6 and 8 to reduce, that is also the reason why those are the ones with less influence levels at the equilibrium.
Because $\alpha=0.6$ is relatively large compared to $\beta=0.1$, the leadership $y_i$ evolves faster than the opinions, allowing agents to adjust their influence levels quickly while the opinions remain largely stable.

The second scenario is illustrated in Figs.~\ref{fig:examplec}--\ref{fig:exampled}. Interestingly, we observe that here the order of opinions changes over time. For instance, agent 1 starts with the most negative opinion but ends up less extreme, while agents initially with moderately positive opinions (e.g., agents 2, 6, and 8) become slightly more extreme and they emerge as leaders in the social network. In this case, due to the relatively large value of $\beta=1$ compared to $\alpha=0.6$, the second term in the leadership dynamics (which penalizes disagreement) strongly affects $y_i$, causing rapid adjustments. As a result, agents with moderate opinions achieve higher levels of leadership.


\section{Equilibria and Stability Analysis}\label{sec:stability}
In this section, we investigate the existence of equilibria for the opinion-leadership model in \eqref{eq:leader-model} and their stability.

We first establish that any equilibrium point $x^*$, if it exists,  is necessarily bounded by the initial condition of the opinions. More precisely, given $\underline{x} = \min_j x_j(0)$ and $\bar x = \max_j x_j(0)$, $x^*$ lies in the hyper-rectangle $\mc X:= [\underline{x},\, \bar x ]^n$.

\begin{prop}\label{prop:K} Let $x^*$ be an equilibrium of~\eqref{eq:leader-model}. Then, $x^* \in \mc X$.
\end{prop}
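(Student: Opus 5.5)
At an equilibrium $(x^*,y^*)$ of \eqref{eq:leader-model} the first equation gives, for every $i$,
\[
(1-y_i^*)\Big(x_i^*-\sum\nolimits_j W_{ij}x_j^*\Big)+y_i^*\big(x_i^*-x_i(0)\big)=0 .
\]
Rearranging, each $x_i^*$ is a convex combination:
\[
x_i^* = (1-y_i^*)\sum\nolimits_j W_{ij}x_j^* + y_i^*\,x_i(0).
\]
This uses $y_i^*\in[0,1]$ and the row-stochasticity of $W$. The leadership weights lie in $[0,1]$ either because the equilibrium lies in the invariant set $\Delta$ of the earlier lemma, or, if $y^*$ is arbitrary, because the second equation forces it. We will state this as the standing assumption $y^*\in[0,1]^n$, consistent with the model's definition of $y$. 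The plan is a maximum-principle argument on this convex-combination identity.

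For the upper bound, pick $k$ with $x_k^*=\max_j x_j^*=:M$ and suppose, for contradiction, that $M>\bar x$. Since $\sum_j W_{kj}x_j^*\le M$ and $x_k(0)\le \bar x<M$, the identity at $i=k$ gives
\[
M\le (1-y_k^*)M+y_k^*\,x_k(0).
\]
Hence $y_k^*(M-x_k(0))\le 0$, which forces $y_k^*=0$. Then $M=\sum_j W_{kj}x_j^*$, so every out-neighbour $j$ of $k$ (with $W_{kj}>0$) also attains the maximum $M$. Repeating the argument at each such $j$ gives $y_j^*=0$ there as well.

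The main obstacle is closing this propagation. The set $S$ of maximizers is closed under out-neighbours, and every node in $S$ has $y^*=0$. Without an assumption on $W$ (irreducibility) this does not yet yield a contradiction. For example, a closed class in which everyone is a pure follower could sit at any common value, outside $[\underline x,\bar x]$. I would resolve this using the leadership equation at equilibrium. For $j\in S$, $y_j^*=0$ makes $\dot y_j=\alpha x_j^{*2}=0$, so $x_j^*=0$. Thus $M=0$, and so $\bar x<0$. We must now treat the case where all $x_i(0)<0$ but the maximum of $x^*$ is $0$.

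To finish, I would rule out this case, most likely using irreducibility of $W$, which the notation section already implicitly assumes. If $W$ is irreducible, $S$ closed under out-neighbours implies $S=\{1,\dots,n\}$. Then all $y_i^*=0$, all $x_i^*=0$, and $x^*=\boldsymbol 0$. This sits outside $\mathcal X$ only if all initial opinions share a strict sign, which is a degenerate situation to be excluded or noted. If this edge case resists, I would alternatively argue via the trajectory. Since the statement concerns equilibria reached from $x(0)$, I would show directly that $[\underline x,\bar x]^n$ is forward invariant for the $x$-dynamics with any $y(t)\in[0,1]^n$. At a face $x_i=\bar x$, $\dot x_i=(1-y_i)\big(\sum_j W_{ij}x_j-\bar x\big)+y_i\big(x_i(0)-\bar x\big)\le 0$, and symmetrically at $x_i=\underline x$. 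Applying Nagumo's theorem as in the earlier lemma then shows that any limit equilibrium lies in the closed set $\mathcal X$. The lower bound is symmetric, obtained by replacing $x$ with $-x$.
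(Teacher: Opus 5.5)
Your argument is correct, and it takes a different route from the paper's. The paper writes the first equilibrium equation as $x^*=(I-(I-Y^*)W)^{-1}Y^*x(0)$ and observes that $(I-(I-Y^*)W)\boldsymbol{1}=Y^*\boldsymbol{1}$. Using nonnegativity of the inverse, it then reads $x^*$ as the image of $x(0)$ under a nonnegative row-stochastic matrix, so $x^*$ lies in $[\underline x,\bar x]^n$. This silently requires $\rho((I-Y^*)W)<1$. That holds when $y^*>\boldsymbol{0}$, e.g.\ whenever all $x_i^*\neq 0$ by \eqref{eq:yi*}, but it fails at $(\boldsymbol{0},\boldsymbol{0})$, where the matrix to invert is the singular $I-W$.

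Your entrywise maximum principle needs no inverse, and combined with the leadership equation it gives a sharper conclusion: the bound can only fail at a node with $(x_k^*,y_k^*)=(0,0)$, and only if all initial opinions share a strict sign. A single maximizer already forces $M=0$. The propagation and irreducibility are therefore needed only to identify the exceptional equilibrium. In particular, if $\underline x\le 0\le\bar x$, your argument proves the claim for every equilibrium in $\Delta$ with no assumption on $W$.

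Do not call the one-signed case degenerate. It is a genuine counterexample to the statement as written: with $x(0)=x_0\boldsymbol{1}$ as in Corollary~\ref{cor:equal}, $\mc X=\{x_0\boldsymbol{1}\}$, which does not contain $\boldsymbol{0}$. The paper's proof hides this case behind the unjustified inversion.

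What the paper's route buys is the explicit convex-combination representation $x^*=(L+Y^*W)^{-1}Y^*x(0)$, the same operator reused in Proposition~\ref{prop:slow} and in \eqref{eq:norm-inf-inv}. Your fallback via forward invariance of $\mc X\times[0,1]^n$ is also sound. It proves the natural corrected statement for equilibria reached from $(x(0),y(0))$, and in passing shows that the trivial equilibrium is never approached from one-signed initial opinions.
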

\begin{pf}
First, observe that
$$(I - (I-Y)W) \boldsymbol{1} = \boldsymbol{1} - W \boldsymbol{1} - Y W \boldsymbol{1} = Y \boldsymbol{1},$$
so that
$$\boldsymbol{1} = \big(I - (I- Y(x))W\big)^{-1} Y \boldsymbol{1}.$$
Using this identity together with~\eqref{eq:leader-model}, we estimate upper and lower bounds for $x^*$ as
\begin{align*}
x^* &\leq \bar{x} \, [I - (I- Y(x))W]^{-1} Y \boldsymbol{1} = \bar{x} \, \boldsymbol{1}, \\
x^* &\geq \underline x \, [I - (I- Y(x))W]^{-1} Y \boldsymbol{1} = \underline x \, \boldsymbol{1}.
\end{align*}
Hence, each component of $x^*$ satisfies
$\underline{x} \le x_i^* \le \bar x$, for all $i=1,\dots,n$,
which yields the claim.\qed
\end{pf}


After having established this general result, we provide now a more precise characterization of the equilibria of the opinion-leadership model, demonstrating that the origin is always an unstable equilibrium, and providing a sufficient condition for the convergence of the dynamics.
\begin{thm}\label{th:general-eq} 
	Consider the opinion-leadership model in \eqref{eq:leader-model} with initial condition $(x(0), y(0))$. 
		Then, the following hold:
        \begin{enumerate}
            \item[(i)] $(\boldsymbol{0},\boldsymbol{0})$ is an unstable equilibrium point;
            \item[(ii)] If $x_i(0) \neq 0$ for all $i$ and, given $\varepsilon>0$ and $M := \max\{|\underline{x}|, |\bar x|\}$, it holds
            \be \label{eq:cond-norm-infty} \frac{3 \beta (\bar x - \underline{x})^2}{\alpha \varepsilon^2}\left(\frac{2M}{\varepsilon}+1\right)<1\,,\ee
            then the dynamics converges to the unique equilibrium point $(x^*,y^*)$ that depends on the initial condition and such that $y^* > \boldsymbol{0}$ and $\min_i |x_i^*|\geq\varepsilon$.
        \end{enumerate}
\end{thm}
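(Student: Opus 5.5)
For part (i), I will first check that $(\boldsymbol{0},\boldsymbol{0})$ is an equilibrium. Since $x(0)$ appears as a parameter in \eqref{eq:leader-model}, the origin is an equilibrium of the dynamics with $x(0)=\boldsymbol{0}$. Instability then concerns perturbations of the full state. I will linearize \eqref{eq:leader-model-compact} at the origin. The $x$-block has Jacobian $-(I-W)$, which has a zero eigenvalue because $W\boldsymbol{1}=\boldsymbol{1}$. The $y$-block has identically zero Jacobian, since both terms are quadratic in $x$. The linearization is therefore degenerate, and I will apply the center manifold method, as announced in the introduction. On the center directions, the leading dynamics is $\dot y_i \approx \alpha x_i^2 - \beta y_i((I-W)x)_i^2$. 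Along the consensus direction $x=c\boldsymbol{1}$, the disagreement term vanishes and we obtain $\dot y_i=\alpha(1-y_i)c^2>0$. Hence $y$ grows and leaves any small neighbourhood of $\boldsymbol{0}$, which gives instability. Concretely, I will show that for arbitrarily small $c\neq 0$ the trajectory from $(c\boldsymbol{1},\boldsymbol{0})$ has $x\equiv c\boldsymbol{1}$. With $x(0)=c\boldsymbol{1}$, we have $\dot x=0$ identically. Then $y_i(t)=1-e^{-\alpha c^2 t}\to 1$, so the solution escapes a fixed neighbourhood of the origin.

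For part (ii), I will set up a fixed point problem for the equilibria. At an equilibrium, $\dot x=0$ gives $x^*=[I-(I-Y)W]^{-1}Yx(0)=:F(y)$. This is well defined whenever $y>\boldsymbol{0}$, and Proposition~\ref{prop:K} gives $F(y)\in\mc X$. The condition $\dot y=0$ gives, componentwise,
\[
y_i=G_i(x)=\frac{\alpha x_i^2}{\alpha x_i^2+\beta((I-W)x)_i^2}.
\]
Equilibria are therefore fixed points of $T=G\circ F$. I will restrict $T$ to the set $\mathcal{S}=\{y: \text{the components of } F(y) \text{ satisfy } |F(y)_i|\ge\varepsilon\}$. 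On this set, $|((I-W)x)_i|\le \bar x-\underline x$, so
\[
y_i\ge \frac{\alpha\varepsilon^2}{\alpha\varepsilon^2+\beta(\bar x-\underline x)^2}>0 .
\]
Next I will bound the Lipschitz constant of $T$ in $\|\cdot\|_\infty$ as the product of the Lipschitz constants of $G$ and $F$. For $G$, the derivative of $u^2/(u^2+\kappa v^2)$ with $|u|\ge\varepsilon$, $|u|\le M$ and $|v|\le \bar x-\underline x$ produces factors like $\beta(\bar x-\underline x)^2/(\alpha\varepsilon^2)$, times $(2M/\varepsilon+1)$ coming from the $\partial/\partial x_i$ and $\partial/\partial x_j$ contributions. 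For $F$, the resolvent identity gives
\[
F(y)-F(y')=[I-(I-Y)W]^{-1}(Y-Y')\,(x(0)-Wx'),
\]
with the resolvent bounded using row-stochasticity, mirroring the identity in Proposition~\ref{prop:K}. I expect these to combine into exactly the left-hand side of \eqref{eq:cond-norm-infty}, with the constant 3 arising from collecting the terms. Under that condition $T$ is a contraction, and Banach's theorem yields a unique fixed point $y^*>\boldsymbol{0}$ with $\min_i|x^*_i|\ge\varepsilon$. Here $x(0)_i\neq0$ is used to make $\mathcal{S}$ nonempty and invariant.

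The main obstacle is twofold. First, I must show that $T$ maps $\mathcal{S}$ into itself, that is, that the opinions stay $\varepsilon$ away from zero; I would try to get this from sign preservation together with the lower bound on $y$. Second, I must pass from a unique equilibrium to convergence of the trajectory. For convergence, I would combine Lemma~1 (compactness of $\Delta$) with a LaSalle-type argument, or with a comparison argument showing that the contraction of $T$ transfers to the flow via $\dot y=(\alpha x^2+\beta((I-W)x)^2)\odot(T(y)-y)$ once $x$ tracks $F(y)$. This last step is where I expect the most difficulty, and it may need local exponential stability of $(x^*,y^*)$ plus boundedness.
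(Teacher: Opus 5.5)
\textbf{Part (i).} You take a different and more elementary route than the paper. Instead of linearizing at the origin and appealing to a center manifold, you exhibit explicit escaping trajectories from $(c\boldsymbol 1,\boldsymbol 0)$. This is exactly the mechanism of Corollary~\ref{cor:equal}, and it works for every row-stochastic $W$. You should state explicitly the reading it relies on: perturbing the initial opinion also moves the anchor $x(0)$ in \eqref{eq:leader-model}. The paper instead keeps $x(0)$ as a frozen parameter; that is where the block $\mathrm{diag}(x(0))$ in its Jacobian comes from, and in your reading this block vanishes at the origin. The distinction is not cosmetic. With the anchor frozen at $\boldsymbol 0$ and $n=1$, the quantity $x^2+\frac{2}{\alpha}\big(-y-\ln(1-y)\big)$ is conserved, so the origin is then Lyapunov stable.

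\textbf{Part (ii), the self-map step.} There is a genuine gap at the first obstacle you name, and it cannot be closed under the stated hypotheses, because \eqref{eq:cond-norm-infty} never relates $\varepsilon$ to $\min_i|x_i(0)|$. Take $n=2$, $W=\begin{bmatrix}0&1\\1&0\end{bmatrix}$, $x(0)=(10^{-3},1)$, $\alpha/\beta=100$ and $\varepsilon=1/2$. Then $M=1$ and the left-hand side of \eqref{eq:cond-norm-infty} is about $0.6$. Any equilibrium with $x_1^*\neq0$ lies in $[10^{-3},1]^2$ by Proposition~\ref{prop:K}. It also satisfies $(x_1^*-x_2^*)^3=-\rho\,(x_1^*)^2(x_1^*-10^{-3})$. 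If $x_1^*\geq 1/2$, the right side is at most $-12.4$, while the left side is at least $-1/8$. So no equilibrium satisfies $\min_i|x_i^*|\geq\varepsilon$; the unique nontrivial one has $x_1^*\approx0.18$. Hence the conclusion of (ii) fails as stated, even though signs are preserved ($x^*>\boldsymbol 0$). The self-map property you need for $\mathcal S$, and that the paper needs for $\mc X_\varepsilon$, is false here. The paper does not prove it either; it simply declares $\phi:\mc X_\varepsilon\to\mc X_\varepsilon$. What Proposition~\ref{prop:K} does give, since $F(y)$ is a convex combination of the $x_j(0)$, is $|F(y)_i|\geq\min_j|x_j(0)|$ when all initial opinions share a sign. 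An extra hypothesis of that kind, together with $\varepsilon\leq\min_j|x_j(0)|$, would make your set invariant for free.

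\textbf{Part (ii), convergence of trajectories.} Your second obstacle is also genuine, and the paper's proof does not supply it. It infers convergence of the ODE from convergence of the Banach iterates $x_{n+1}=\phi(x_n)$, which does not follow. Banach only yields existence and uniqueness of the nontrivial equilibrium; convergence needs a separate dynamical argument. Your identity $\dot y=\big(\alpha x^2+\beta((I-W)x)^2\big)\odot\big(G(x)-y\big)$ is correct. However, turning it into $T(y)-y$ requires $x=F(y)$, i.e., a time-scale separation that \eqref{eq:leader-model} does not have.

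\textbf{Part (ii), the contraction constant.} Your $y$-space map $T=G\circ F$ will not reproduce \eqref{eq:cond-norm-infty} exactly. The paper's constant comes from differentiating $((I-W)x)_i^3/(\rho x_i^2)$ in the $x$-space map \eqref{eq:fixed-point}. Set $D:=\bar x-\underline x$ and $r:=D^2/(\rho\varepsilon^2)$. You get $\mathrm{Lip}(G)\leq\frac{2D^2}{\rho\varepsilon^3}+\frac{4D}{\rho\varepsilon^2}$, and from your resolvent identity $\mathrm{Lip}(F)\leq D/\min_i y_i\leq D(1+r)$. Hence $\mathrm{Lip}(T)\leq r(1+r)(2D/\varepsilon+4)$. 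Using $D\leq 2M$, $M\geq\varepsilon$ and \eqref{eq:cond-norm-infty} itself, this is below $3r(2M/\varepsilon+1)<1$. So the contraction estimate does go through, provided you put $y\geq y_{\min}\boldsymbol 1$ into the domain. The resolvent bound is needed at every $y$ in the domain, not only on its image.
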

\begin{pf}
	(i) Note that $(x^*, y^*) \in \Delta$ is an equilibrium if and only if  
    $$\begin{cases}
	\boldsymbol{0} = - x + (I-Y)Wx + Y x(0), \\
	\boldsymbol{0}= \rho\, (I - Y) \mathrm{diag}(x)\, x - \, Y\mathrm{diag}(( I - W)x)\, ( I - W)x,
\end{cases} $$
    where we define $\rho:= \alpha/\beta$. 
    This implies that the equilibrium points consist of the trivial point $(\boldsymbol{0}, \boldsymbol{0})$ and all points $(x^*, y^*)$ with $x_i^* \neq 0$ and 
	\be \label{eq:yi*} y_i^* = \frac{\rho {x_i^*}^2}{\rho {x_i^*}^2+  (x_i^*- \sum_j W_{ij} x_j^*)^2}, \quad \forall i = 1,\dots,n.\ee
    We can now study the local stability of $(\boldsymbol{0}, \boldsymbol{0})$ using the linearization theorem. Linearizing \eqref{eq:leader-model-compact} around $(\boldsymbol{0},\boldsymbol{0})$, we find the Jacobian matrix
    $$J_{(\boldsymbol{0},\boldsymbol{0})} = \begin{bmatrix} 
    	W - I \, &\,  \mathrm{diag}(x(0)) \\[1ex]
    	\mathbb{O} & \mathbb{O}
    \end{bmatrix}.
    $$
    Since $W$ is row-stochastic, the spectrum of $J_{(\boldsymbol{0},\boldsymbol{0})}$ contains $n-1$ eigenvalues with negative real part and $n+1$ zero eigenvalues. Therefore, the linearization theorem alone does not allow us to conclude on stability and we rely on the center manifold method. The stable subspace is generated by the eigenvectors associated to the negative eigenvalues of $(W-I)$. The central subspace is instead generated by the eigenvectors associated to the null eigenvalues, which means 
    $$\mc V_{\mc C} = \{ (x,y): (W-I)x + \diag(x(0))y =\boldsymbol{0} \}\,,$$
    and the center manifold is locally approximated with the function $y = h(x)$ where $h(\boldsymbol{0}) = \boldsymbol{0}$, $\dot h(\boldsymbol{0}) = \boldsymbol{0}$ and $h(x) = o(|x|)$. Therefore, the reduced equation on the center manifold is
    $$ \dot y = \alpha (1-h(x))x^2 + o(|x|^2),$$
    for which the equilibrium $(\boldsymbol{0},\boldsymbol{0})$ is unstable. Therefore, the origin is unstable also for the original system \eqref{eq:leader-model-compact}. 
    
    (ii) We proceed with the characterization of the equilibrium points given in \eqref{eq:yi*}. Substituting \eqref{eq:yi*} into the first equation of \eqref{eq:leader-model} at the equilibrium yields
	$$ \Big(x_i^* - \sum\nolimits_j W_{ij} x_j^*\Big)^3 + \rho  {x_i^*}^2(x_i^* - x_i(0)) = 0, \, \forall i\in1,\dots,n.$$
	To characterize the equilibria, we study the corresponding fixed-point equation  
	\begin{equation}\label{eq:fixed-point} 
		x = -\frac{1}{\rho} (X^{-1})^2 \mathrm{diag}((I-W)x)^2 (I-W)x + x(0)\,,
	\end{equation}
	where $X = \mathrm{diag}(x_1, \dots, x_n)$. 
    For a given $\varepsilon>0$, define
    $\mc X_\varepsilon :=\{x\in \mc X:|x_i|\geq \varepsilon >0, \forall i\}$, 
    and let $\phi: \mc X_\varepsilon \to \mc X_\varepsilon$ denote the right-hand side of \eqref{eq:fixed-point}. Then, fixed points of $\phi$ correspond exactly to the equilibria of the system.
    
    To analyze existence and uniqueness, we compute the Jacobian of $\phi$. Its entries are
    \begin{align*}
    	\frac{\partial \phi_i(x)}{\partial x_i}\!\! &= \!- \frac{3 x_i(x_i \!-\! \sum_j W_{ij}x_j)^2(1\! -\! W_{ii}) \!-\! 2 (x_i \!-\! \sum_j W_{ij}x_j)^3}{\rho x_i^3} \\
		&= \!- \frac{ (x_i - \sum_j W_{ij}x_j)^2 (x_i(1 -3 W_{ii}) + 2 \sum_j W_{ij}x_j)}{\rho x_i^3},\\
		\frac{\partial \phi_i(x)}{\partial x_k} \!\!&= -  \frac{3 (x_i - \sum_j W_{ij}x_j)^2(-W_{ik})}{\rho x_i^2}, \quad \forall k \neq i.
	\end{align*}
	Hence, the infinity norm of the Jacobian satisfies the following inequality:
	\begin{align*}
		\left\|\frac{\partial \phi(x)}{\partial x} \right\|_\infty &= \max_i\left\{ \left|\frac{\partial \phi_i(x)}{\partial x_i}\right| + \sum_{k\neq i} \left|\frac{\partial \phi_i(x)}{\partial x_k}\right|\right\}\\
		&\leq \frac{1}{\rho}\left( \frac{6M(\bar x - \underline{x})^2}{\varepsilon^3} + \frac{3(\bar x- \underline{x})^2}{\varepsilon^2}\right)\\
		& =  \frac{3 (\bar x - \underline{x})^2}{\rho \varepsilon^2}\left(\frac{2M}{\varepsilon}+1\right),
	\end{align*}
    where the inequality follows from Proposition \ref{prop:K} and the definitions of $M$ and $\mc X_\varepsilon$.
    Under condition \eqref{eq:cond-norm-infty}, $\phi$ is therefore a contraction map. By the Banach fixed-point theorem \citep{kirk2001metric}, it admits a unique fixed point $x^*$ on the complete metric space $\mc X_\varepsilon$. Moreover, for any initial condition $x(0)$ satisfying \eqref{eq:cond-norm-infty}, the sequence of iterates $x_{n+1} = \phi(x_n)$ converges to $x^*(x(0))$.
    Since $x^*(x(0))$ is a fixed point of $\phi$, it corresponds to an equilibrium of the dynamics \eqref{eq:leader-model-compact}. Therefore, under condition \eqref{eq:cond-norm-infty}, the system converges to the equilibrium $(x^*(x(0)), y^*(x(0)))$, with $y^*(x(0))$ given by \eqref{eq:yi*}.\qed
\end{pf}


\begin{rem}
    It is worth highlighting that condition~\eqref{eq:cond-norm-infty} is only sufficient and does not characterize the exact stability boundary. The admissible region should thus be interpreted as a conservative inner approximation of the true stability region. 
\end{rem}

The results of Theorem~\ref{th:general-eq} are illustrated in Fig.~\ref{fig:admissible-region}. In particular, we illustrate the set of admissible pairs $(\varepsilon, \rho)$ satisfying~\eqref{eq:cond-norm-infty} for the case in which $\bar x = 0.1$ and $\underline{x} = -0.1$, so that $M = 0.1$.
The region ensuring existence and convergence (depicted in cyan) increases for larger values of $\rho = \alpha/\beta$, meaning that a weaker sensitivity to disagreement enables equilibrium convergence even when opinions remain relatively close to the origin (small $\varepsilon$).

\begin{figure}
    \centering
    \includegraphics[width=0.65\linewidth]{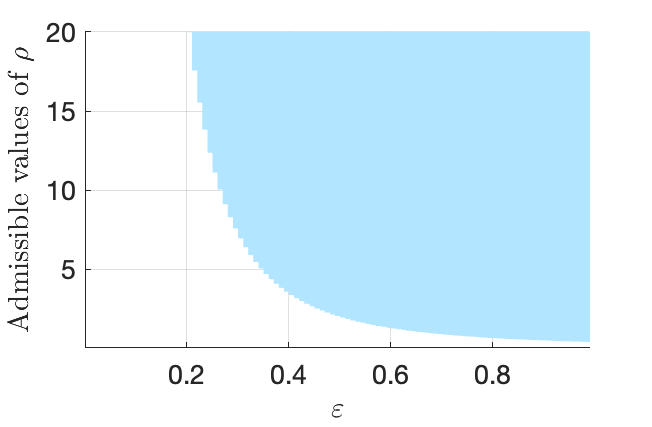}
    \caption{Region of admissible parameter values $(\varepsilon, \rho)$ for which
condition~\eqref{eq:cond-norm-infty} holds true.}
    \label{fig:admissible-region}
\end{figure}

While Theorem~\ref{th:general-eq} provides a sufficient condition for convergence to an equilibrium, it does not allow, in general, to establish a closed-form expression for such an equilibrium. Now we derive a corollary showing that, in the simple scenario in which agents share the same initial opinion, a complete characterization of the behavior of the opinion-leadership model is possible.
\begin{cor}\label{cor:equal}
    Consider the opinion-leadership model in \eqref{eq:leader-model}. If $x(0) = x_0 \boldsymbol{1}$ with $x_0 \neq 0$, then the system converges to the unique equilibrium $(x^* ,\boldsymbol{1})=( x_0 \boldsymbol{1},\boldsymbol{1})$.
\end{cor}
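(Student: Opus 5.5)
Equilibria are handled first via Proposition~\ref{prop:K}, then convergence is shown directly from the dynamics. With $x(0)=x_0\boldsymbol{1}$ we have $\underline{x}=\bar x=x_0$, so $\mc X=\{x_0\boldsymbol{1}\}$. Proposition~\ref{prop:K} then forces every equilibrium to satisfy $x^*=x_0\boldsymbol{1}$. Since $x_0\neq 0$, the trivial equilibrium is excluded, and \eqref{eq:yi*} applies. Because $W$ is row-stochastic, $(I-W)x^*=\boldsymbol{0}$, so $y_i^*=\rho x_0^2/(\rho x_0^2)=1$ for all $i$. This gives uniqueness of the equilibrium $(x_0\boldsymbol{1},\boldsymbol{1})$ among points with $x^*\neq\boldsymbol{0}$. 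Note that condition~\eqref{eq:cond-norm-infty} is trivially satisfied here, since $\bar x-\underline{x}=0$, for any $\varepsilon\le|x_0|$. However, I prefer not to rely on the contraction argument for convergence of trajectories.

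\textbf{Step 1: the opinions never move.} The first equation of \eqref{eq:leader-model-compact} at $x=x_0\boldsymbol{1}$ gives $\dot x=-(I-Y)(I-W)x_0\boldsymbol{1}-Y(x_0\boldsymbol{1}-x_0\boldsymbol{1})=\boldsymbol{0}$ for every $y$. Hence the set $\{x_0\boldsymbol{1}\}\times[0,1]^n$ is invariant. By uniqueness of solutions (the well-posedness Lemma), $x(t)\equiv x_0\boldsymbol{1}$ for all $t\ge0$.

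\textbf{Step 2: the leadership dynamics decouple.} On this set $(I-W)x=\boldsymbol{0}$, so the leadership equation becomes $\dot y_i=\alpha x_0^2(1-y_i)$. Its solution is explicit:
\[
1-y_i(t)=e^{-\alpha x_0^2 t}\,(1-y_i(0)).
\]
Since $\alpha x_0^2>0$, this tends to $0$ exponentially for any $y(0)\in[0,1]^n$. Therefore $(x(t),y(t))\to(x_0\boldsymbol{1},\boldsymbol{1})$.

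\textbf{Main obstacle.} The only delicate point is reconciling ``unique equilibrium'' with the trivial equilibrium $(\boldsymbol{0},\boldsymbol{0})$. I will handle it by noting that Proposition~\ref{prop:K} restricts equilibria compatible with the given initial opinions to $\mc X=\{x_0\boldsymbol{1}\}$, which excludes the origin because $x_0\neq0$. Alternatively, Step 1 shows directly that the trajectory stays at $x=x_0\boldsymbol{1}\neq\boldsymbol{0}$, so the origin is unreachable. Everything else is a direct computation.
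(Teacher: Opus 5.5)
Your Steps 1--2 are correct and follow the paper's skeleton. Row-stochasticity freezes the opinions at $x_0\boldsymbol{1}$, and the leadership equation then decouples into $\dot y_i=\alpha x_0^2(1-y_i)$, whose only zero is $y_i=1$. The difference is the convergence step. The paper closes by noting that \eqref{eq:cond-norm-infty} holds trivially because $\bar x=\underline{x}$ and appealing to Theorem~\ref{th:general-eq}(ii); you instead solve the scalar linear ODEs explicitly. Your route is the more convincing one. The Banach argument in Theorem~\ref{th:general-eq}(ii) establishes convergence of the iteration $x_{n+1}=\phi(x_n)$, not of the ODE trajectories. Here $\mc X_\varepsilon$ also collapses to the single point $x_0\boldsymbol{1}$, so that argument says nothing about the flow. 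Your formula $1-y_i(t)=e^{-\alpha x_0^2t}(1-y_i(0))$ gives trajectory convergence from every $y(0)\in[0,1]^n$ with an explicit exponential rate. Invoking uniqueness of solutions to get $x(t)\equiv x_0\boldsymbol{1}$ is also tighter than the paper's informal extension to all $t\ge 0$. Your rate constant $\alpha$ is the right one; the paper's reduced equation writes $\rho$ there.

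Drop, however, your first resolution of the ``main obstacle.'' The origin is an equilibrium of \eqref{eq:leader-model} for every $x(0)$: substitute $x=y=\boldsymbol{0}$, which is exactly Theorem~\ref{th:general-eq}(i). So Proposition~\ref{prop:K} cannot exclude it; its proof inverts $I-(I-Y)W$, which at $Y=\mathbb{O}$ is the singular matrix $I-W$.

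The same caveat affects your claim of uniqueness among all equilibria with $x^*\neq\boldsymbol{0}$. That claim holds when every $x_i^*\neq 0$: then $y^*>\boldsymbol{0}$ by \eqref{eq:yi*}, $\|(I-Y)W\|_\infty<1$, and the inverse exists. For reducible $W$, though, mixed equilibria can occur. For example, with $W=I$ and $n=2$, the point $x^*=(0,x_0)^T$, $y^*=(0,1)^T$ is an equilibrium.

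The word ``unique'' in the statement should be read as the paper's proof reads it: uniqueness within the invariant slice $\{x_0\boldsymbol{1}\}\times[0,1]^n$ on which the trajectory lives. That is precisely your alternative argument, which is sound and is all you need.
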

\begin{pf}
    Observe that for each agent $i$,
	$$\sum\nolimits_{j} W_{ij} x_j(0) = x_0 \sum\nolimits_{j} W_{ij} = x_0\,,$$
	since $W$ is row-stochastic. Hence, $x_i(0) - \sum_j W_{ij} x_j(0) = 0$, and the dynamics for $x_i$ becomes
	$\dot{x}_i(0) = 0$. But then, this can be applied for all $t\geq 0$ since $x_i (t)= x_i(0) = x_0$ and also $x_i(t) - \sum_j W_{ij} x_j(t) = 0$ for all $t \geq 0$. Similarly, the dynamics for $y_i$ reduces to
	$\dot{y}_i = \rho (1 - y_i) x_0^2$, 
	which vanishes only for $y_i = 1$ (since $x_0 \neq 0$).  
	Therefore, $(x^*,y^*) = (x_0 \boldsymbol{1}, \boldsymbol{1})$ is the unique equilibrium. Note that the condition in \eqref{eq:cond-norm-infty} is satisfied in this case, since $\underline{x}$ coincides with $\bar x$. Then the system will converge to the equilibrium $(x_0 \mathbf{1}, \boldsymbol{1})$.\qed
\end{pf}
In plain words, Corollary~\ref{cor:equal}, all agents maintain the same opinion over time, and every agent reaches the maximum level of leadership, reflecting the complete alignment of opinions within the network.


\section{Time-scale Separation}\label{sec:timescale}
In many social and organizational contexts, the dynamics of opinions and leadership may evolve at a different pace. For instance, in offline social communities, individuals may quickly adapt their expressed opinions in response to peer pressure, while the process of gaining or losing leadership status typically develops more slowly, depending on reputation, credibility, or long-term visibility~\citep{Nakayama2019}. Conversely, in settings such as online platforms, the rise of leadership or influence (e.g., becoming a trend-setter or influencer) may occur very rapidly, while opinions on specific issues remain relatively stable~\citep{Onnela2010}. 

To capture these two scenarios, we introduce a \emph{time-scale separation} between the opinion and leadership dynamics. We rewrite the opinion-leadership dynamics as
\be \label{eq:leader-model-timescale}
\begin{cases}
	\dot{x}_i = \ds - \Big( x_i - \sum\nolimits_{j} W_{ij} x_j \Big)(1 - y_i) - y_i\big(x_i - x_i(0)\big) \\
	\tau \dot{y}_i = \ds \alpha(1-y_i)x_i^2 - \beta y_i \Big( x_i - \sum\nolimits_{j} W_{ij} x_j \Big)^2 ,
\end{cases}
\ee
so that the parameter $\tau >0$ acts as a timescale separation variable \citep{berglund2006noise}, controlling the relative speed of the leadership dynamics with respect to opinion adaptation.  
Depending on the value of $\tau$, the system exhibits two distinct limiting regimes. When $\tau \to 0$, leadership adjusts rapidly relative to opinions, leading to the \emph{fast-leadership regime}, which is representative, e.g., of the online social network scenario described in the above. When $\tau \to \infty$, instead, leadership evolves much more slowly than opinions, giving rise to the \emph{slow-leadership regime}, representative of the offline context discussed in the previous paragraph. 

This distinction allows the model to capture a wide range of dynamical behaviors across different social or organizational settings. Moreover, as we shall extensively discuss in the rest of this section, such a time-scale separation allows us to perform a deeper analytical study of the system, gaining analytical insights into the emergent leadership behavior of the population. 

\subsection{Fast-leadership regime}
In the fast-leadership regime (i.e., in the limit $\tau \to 0$), the leadership variable $y$ evolves on a much faster time scale than the opinions $x$, which can be initially assumed to be constant, yielding the following result.

\begin{prop}\label{prop:fast}
    Consider the opinion-leadership in \eqref{eq:leader-model-timescale} under the fast-leadership regime ($\tau \to 0$). Given $x$ constant, the leadership of each node $y_i$ will converge to 
    \begin{equation} \label{eq:y-equilibrium}
	y_i^*(x) = 
	\begin{cases}
		0 & \text{if } x_i = 0,\\[2mm]
		\ds \frac{ \rho x_i^2}{ \rho x_i^2 +  \, (x_i - \sum_{j} W_{ij} x_j )^2} & \text{if } x_i \neq 0.
	\end{cases}
\end{equation}
for all $i\in1,\dots,n$.
\end{prop}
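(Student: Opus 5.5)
With $x$ frozen, the leadership equations in \eqref{eq:leader-model-timescale} decouple across agents. Each $y_i$ then obeys a scalar, affine ODE in the fast time $s=t/\tau$. The plan is to solve this linear ODE explicitly and read off its limit. Fix $x$ and write $a_i:=\alpha x_i^2\ge 0$ and $b_i:=\beta\big(x_i-\sum_j W_{ij}x_j\big)^2\ge 0$. The fast subsystem is then
$$\frac{dy_i}{ds} = a_i(1-y_i) - b_i y_i = a_i - (a_i+b_i)\,y_i .$$
Since $x$ is constant, $a_i$ and $b_i$ are constants. This is a scalar linear ODE with explicit solution
$$y_i(s) = y_i^\infty + \big(y_i(0)-y_i^\infty\big)e^{-(a_i+b_i)s}, \qquad y_i^\infty:=\frac{a_i}{a_i+b_i}.$$
Whenever $a_i+b_i>0$, it converges exponentially to $y_i^\infty$. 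If $x_i\ne0$ then $a_i>0$, so this applies. Dividing numerator and denominator by $\beta$ and using $\rho=\alpha/\beta$ gives exactly the second branch of \eqref{eq:y-equilibrium}. The first lemma of the paper guarantees that $y_i(s)$ stays in $[0,1]$ throughout, consistent with $y_i^\infty\in[0,1]$.

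If $x_i=0$, then $a_i=0$ and the equation reduces to $dy_i/ds=-b_i y_i$ with $b_i=\beta(\sum_j W_{ij}x_j)^2$. When $b_i>0$, $y_i$ decays exponentially to $0$, which is the first branch.

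The main obstacle is the degenerate subcase $x_i=0$ together with $\sum_j W_{ij}x_j=0$. There $a_i=b_i=0$, so $y_i$ simply stays at $y_i(0)$ rather than converging to $0$. I would handle this case explicitly by stating the convention that the formula holds generically, i.e.\ under $b_i>0$. Alternatively, I would note that $0$ is an equilibrium of the fast dynamics and that the stated value is the one obtained as the limit of the second branch when $\sum_jW_{ij}x_j\ne 0$ and $x_i\to0$.

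In all nondegenerate cases the result follows directly from the closed-form solution. Justifying the use of frozen $x$ as $\tau\to0$, i.e.\ the standard singular perturbation/Tikhonov argument, then relies on the fast equilibrium being exponentially attractive uniformly, with rate $a_i+b_i$, on compact sets where $a_i+b_i$ is bounded away from zero.
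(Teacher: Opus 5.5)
Your proposal is correct and is essentially the paper's argument. With $x$ frozen, each $y_i$ solves a scalar affine ODE with decay rate $\alpha x_i^2+\beta(x_i-\sum_j W_{ij}x_j)^2$; the paper expresses this through the diagonal Jacobian $J_y$ with strictly negative entries, and you through the explicit exponential solution. Your caveat about the subcase $x_i=0=\sum_j W_{ij}x_j$ is valid and worth keeping: there $\dot y_i\equiv 0$, so $y_i$ stays at $y_i(0)$ and the first branch of \eqref{eq:y-equilibrium} genuinely fails. The paper's proof does not handle this case, because its ``strictly negative entries'' claim breaks down exactly there.
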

\begin{pf}
   Over the time scale on which $y$ evolves, we can assume $x$ to be quasi-constant. In this approximation, the second equation of \eqref{eq:leader-model-timescale} is linear in $y_i$ and solving for the equilibrium $\dot{y}_i = 0$, we obtain \eqref{eq:y-equilibrium}. With similar considerations as in the proof of Theorem~\ref{th:general-eq}, we get that the equilibrium $(\boldsymbol{0}, \boldsymbol{0)}$ is unstable.
   Focusing now on the equilibrium points where $y_i^*(x) > 0$, stability can be assessed by examining the eigenvalues of the Jacobian matrix, that is 
    $$J_y = -\mathrm{diag}\Big(\rho x_i^2 +   \Big(x_i - \sum\nolimits_{j} W_{ij} x_j \Big)^2\Big).$$
    Since $J_y$ is diagonal with strictly negative entries, all its eigenvalues are negative. By standard results from linear system theory, this implies that each $y_i$ converges to its equilibrium $y_i^*$ for fixed $x$. \qed
\end{pf}

Once the leadership variable $y$ has converged to its equilibrium manifold computed in Proposition~\ref{prop:fast}, according to the time-scale separation argument~\citep{berglund2006noise}, the opinion evolves according to the reduced dynamics  
\be\label{eq:dotx-fast}\dot{x}_i = -\frac{(x_i - \sum_j W_{ij} x_j)^3 + \rho x_i^2 (x_i-x_i(0))}{\rho x_i^2 +  (x_i - \sum_j W_{ij} x_j)^2}.\ee
Due to lack of space, we do not report here the full proof on the characterization of the equilibria, but we provide a brief outline.
To characterize the equilibria, we can study the corresponding fixed-point equation, as defined in \eqref{eq:fixed-point}.
Then we can prove existence and stability of an equilibrium $(x^*, y^*)$ as done in Proposition \ref{th:general-eq}(ii).

Figure~\ref{fig:fastLead} illustrates the behavior of the opinion-leadership model \eqref{eq:leader-model-timescale} under a fast-leadership time-scale regime. In this case, for all $i$, the leadership variables $y_i$ quickly converge to the manifold defined by \eqref{eq:y-equilibrium}, after which the opinion dynamics $x_i(t)$ evolve more slowly, eventually reaching an equilibrium. 
\begin{figure}
    \subfloat[Opinions]{\includegraphics[width=0.53\linewidth]{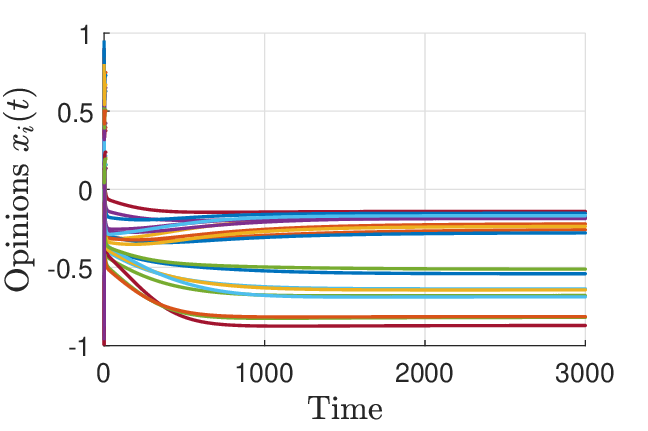}}
    \subfloat[Leadership]{\hspace{-5pt}\includegraphics[width=0.53\linewidth]{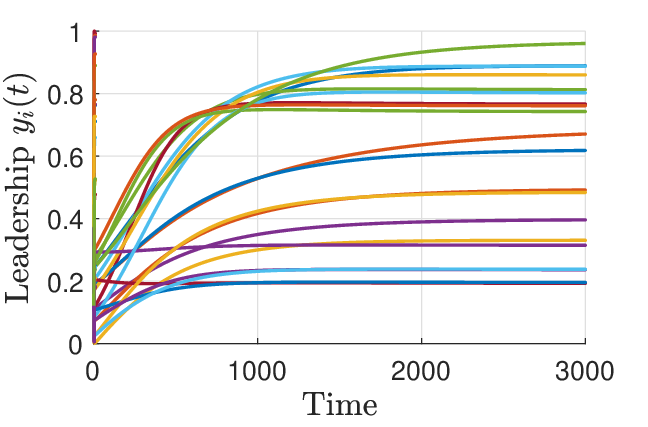}}
    \caption{Simulation of the model in \eqref{eq:leader-model-timescale} under the fast-leadership regime ($\tau \to 0$). }
    \label{fig:fastLead}
\end{figure}

\subsection{Slow-leadership regime} 
In the slow-leadership regime (i.e., in the limit $\tau \to \infty$), the evolution of the opinion variable $x$ occurs on a much faster time scale than that of leadership $y$, which can be initially assumed as constant, obtaining the following result. First, define the Laplacian matrix $L:= I-W$.

\begin{prop}\label{prop:slow}
    Consider the opinion-leadership in \eqref{eq:leader-model-timescale} under the slow-leadership regime ($\tau \to \infty$). Given $y$ constant, the opinion will converge to 
    \be\label{eq:x-equilibrium}x^*(y) = \big(L+ Y W\big)^{-1} Y x(0),\ee
    where $Y = \mathrm{diag}(y_1, \dots, y_n)$.
\end{prop}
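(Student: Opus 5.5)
With $y$ frozen, the first equation of \eqref{eq:leader-model-timescale} is an affine, time-invariant linear ODE in $x$. The plan is to rewrite it in standard form, show that its state matrix is Hurwitz, and read off the unique globally attracting equilibrium.

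First I rewrite the dynamics using $L=I-W$. Expanding $-(I-Y)(I-W)x - Y(x-x(0))$ gives
$$\dot x = -\big(I-(I-Y)W\big)x + Y x(0) = -\big(L+YW\big)x + Yx(0),$$
since $I-(I-Y)W = I-W+YW$. Once $A:=L+YW$ is known to be invertible, the unique equilibrium is $x^*(y)=A^{-1}Yx(0)$, which is \eqref{eq:x-equilibrium}. The error $e:=x-x^*(y)$ then satisfies $\dot e=-Ae$. So convergence from every initial opinion is equivalent to $-A$ being Hurwitz.

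The key step is a spectral bound on $B:=(I-Y)W$.
\begin{itemize}
\item $B$ is entrywise nonnegative, because $y\in[0,1]^n$ by the well-posedness lemma.
\item Its row sums are $1-y_i\le 1$.
\item Gershgorin (or $\rho(B)\le\|B\|_\infty$) gives $\rho(B)\le 1$.
\item Every eigenvalue $\lambda$ of $A=I-B$ has the form $1-\mu$ with $|\mu|\le\rho(B)$. Hence $\mathrm{Re}\,\lambda\ge 1-\rho(B)$.
\end{itemize}
It therefore suffices to show $\rho(B)<1$. This strict inequality is the main obstacle. If $y=\boldsymbol{0}$, the system is pure French--DeGroot, $A=L$ is singular, and the formula fails. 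So I will use the standing requirement, implicit in the paper (the same inverse already appears in the proof of Proposition~\ref{prop:K}), that $W$ is irreducible and $y\gneq\boldsymbol{0}$.

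Under these assumptions $B$ is nonnegative and irreducible, since multiplying rows of $W$ by the positive factors $1-y_i$ preserves the zero pattern wherever $y_i<1$. Its row sums are all $\le 1$, with strict inequality in at least one row. The standard Perron--Frobenius consequence is $\rho(B)<\max_i\sum_jB_{ij}$ unless all row sums are equal; equivalently, $B$ is irreducibly substochastic, so $\rho(B)<1$.

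To make this concrete, I would take the Perron vector $v>0$ of $B^T$ and compute $\rho(B)\,v^T\boldsymbol{1}=v^TB\boldsymbol{1}=\sum_i v_i(1-y_i)<v^T\boldsymbol{1}$. The strict inequality holds because $v>0$ and some $y_i>0$.

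A corner case remains: if some $y_i=1$, row $i$ of $B$ vanishes and $B$ is no longer irreducible. I would handle it with the same Perron-vector inequality applied to the irreducible class structure, or by a small perturbation and continuity argument.

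Finally, with $\mathrm{Re}\,\lambda(A)>0$, the matrix $A$ is invertible and $e(t)=e^{-At}e(0)\to\boldsymbol{0}$ exponentially. Hence $x(t)\to x^*(y)=(L+YW)^{-1}Yx(0)$ for every initial opinion, which concludes the argument in the spirit of Proposition~\ref{prop:fast}.
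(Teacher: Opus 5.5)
Your argument is correct and follows the same route as the paper. With $y$ frozen, $(I-Y)L+Y=L+YW$, so the opinion equation is the affine system $\dot x=-(L+YW)x+Yx(0)$. Exponential convergence to $(L+YW)^{-1}Yx(0)$ then follows once $-(L+YW)$ is Hurwitz.

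The difference is in that last step. The paper only asserts the Hurwitz property ``since $W$ is row-stochastic'', and that is not enough. For $y=\boldsymbol{0}$ the matrix is $L$, which is singular, and for non-symmetric $W$ the eigenvalues need not be real. Your Perron-vector computation $\rho(B)\,v^T\boldsymbol{1}=\sum_i v_i(1-y_i)<v^T\boldsymbol{1}$ actually proves it, under the extra hypothesis ($W$ irreducible, $y\gneq\boldsymbol{0}$) that the statement genuinely needs.

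Two small remarks:
\begin{itemize}
\item In the corner case $y_i=1$, continuity alone only gives $\rho(B)\le 1$, not the strict inequality. Use monotonicity of the Perron root instead. Replacing each $y_i=1$ by $1-\delta$ gives an irreducible $B_\delta\ge B\ge 0$ with $y_\delta\gneq\boldsymbol{0}$, hence $\rho(B)\le\rho(B_\delta)<1$.
\item When $\min_i y_i>0$, the one-line bound $\rho(B)\le\|B\|_\infty=1-\min_i y_i<1$ suffices without any irreducibility assumption on $W$. This is the case the paper relies on later, with $y\in\mathcal{X}_\varphi$ in \eqref{eq:norm-inf-inv}.
\end{itemize}
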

\begin{pf}
    Over the time scale on which $x$ evolves, we can treat $y$ as quasi-constant. This allows us to analyze the opinion dynamics separately, considering $y$ fixed.
    In this limit, the equilibrium of the opinion dynamics $\dot{x} = 0$ is given by \eqref{eq:x-equilibrium}.
    The convergence to $x^*(y)$ follows from the fact that, for a fixed $y$, the opinion dynamics defines a linear system and equilibrium stability can be assessed by examining the eigenvalues of the Jacobian matrix, that is
    $$J_x = -\big((I-Y)L + Y\big).$$
    Since $W$ is row-stochastic, all eigenvalues of $J_x$ are negative. By standard results from linear system theory, this implies that $x$ converges exponentially to the unique equilibrium $x^*(y)$ for fixed $y$. \qed
\end{pf}

\begin{rem}
   The expression from Proposition~\ref{prop:slow} can be interpreted as a modification of the equilibrium of a  Friedkin--Johnsen opinion dynamics model~\citep{Friedkin1990}, where each agent’s final opinion is a convex combination of their initial opinion and the weighted average of their neighbors’ opinions, with weights modulated by their current leadership $y_i$.
\end{rem}

According to the time-scale separation principle~\citep{berglund2006noise}, once the fast variable $x$ has reached its equilibrium manifold $x^*(y)$, the slow leadership variable $y$ evolves according to the reduced dynamics
\begin{align}\label{eq:dotx-slow}
    \dot{y} =& \alpha (I \! -\! Y)\diag(x^*(y)) x^*(y)\!-\! \beta Y\diag(Lx^*(y)) Lx^*(y).
\end{align}
This is illustrated in Fig.~\ref{fig:slowLead}, which shows how opinions $x_i(t)$ rapidly settle into a quasi-steady configuration, while the leadership variables $y_i$ adjust on a much slower timescale. In this scenario, the opinion dynamics effectively react to a nearly constant leadership.

\begin{figure}
    \subfloat[Opinions]{\includegraphics[width=0.54\linewidth]{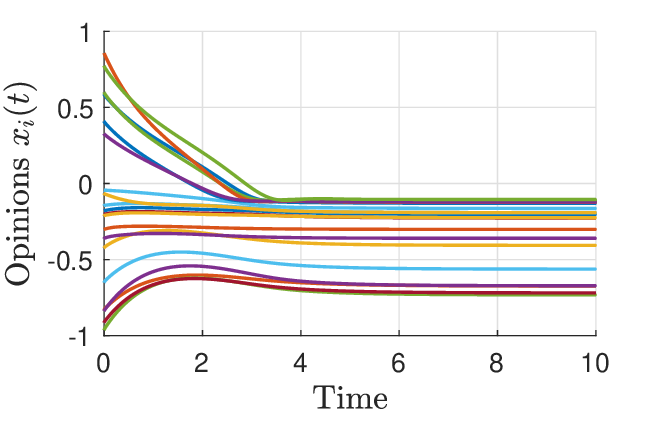}}
    \subfloat[Leadership]{\hspace{-5pt}\includegraphics[width=0.54\linewidth]{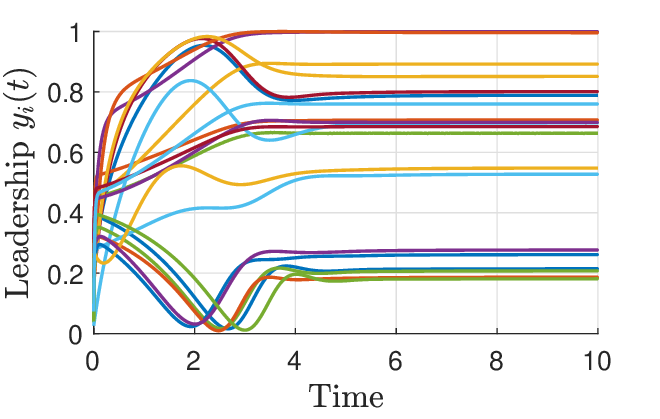}}
    \caption{Simulation of the model in \eqref{eq:leader-model-timescale} under the slow-leadership regime ($\tau \to \infty$).   }
    \label{fig:slowLead}
\end{figure}

Finally, in the context of a slow-leadership time-scale, the following result provides a sufficient condition for the existence and uniqueness of an equilibrium point for the system \eqref{eq:dotx-slow}.
\begin{prop}\label{prop:slow2}
	Consider the reduced dynamics in \eqref{eq:dotx-slow} in the slow-leadership regime ($\tau\to\infty$), and let $x_i(0) \neq 0$ for all $i$. If 
	\be\label{cond2} 4 \frac{M^3}{\underline{x}^2} \frac{\rho \bar{x}^2 + ( \bar x- \underline{x})^2}{\rho \underline{x}^2 + (\bar x- \underline{x})^2}\left( \frac{M(2+\rho)}{\rho \underline{x}^2 + ( \bar x- \underline{x})^2} + 1\right) <1, \ee
    then there exists a unique equilibrium $y^*$ and the dynamics will converge to $y^*$ for any initial condition $y(0)$. 
\end{prop}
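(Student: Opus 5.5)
The approach mirrors the proof of Theorem~\ref{th:general-eq}(ii): rewrite the equilibrium condition of the reduced dynamics \eqref{eq:dotx-slow} as a fixed-point equation in $y$, and show that the resulting map is a contraction on $[0,1]^n$ in the $\infty$-norm.

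Setting $\dot y=\boldsymbol{0}$ in \eqref{eq:dotx-slow}, with $x^*(y)$ as in Proposition~\ref{prop:slow}, gives componentwise
\[
y_i=\psi_i(y):=\frac{\rho\, x_i^*(y)^2}{\rho\, x_i^*(y)^2+\big(L x^*(y)\big)_i^2},
\]
exactly as in \eqref{eq:yi*}. Since $x_i(0)\neq 0$ and the proposition is stated in terms of $\underline{x}$, I will work (as the condition implicitly requires) with $x(0)$ of constant sign. Then $x^*(y)=(L+YW)^{-1}Yx(0)$ is a convex combination of initial opinions, by the identity $(L+YW)^{-1}Y\boldsymbol{1}=\boldsymbol{1}$ from the proof of Proposition~\ref{prop:K}. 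Hence $|x^*_i|\ge |\underline{x}|$ (reading $\underline{x}$ as the smallest modulus), $|x_i^*|\le M$, and $|(Lx^*)_i|\le \bar x-\underline{x}$. These bounds give $\psi:[0,1]^n\to[0,1]^n$ with denominators bounded below by $\rho\underline{x}^2$, and bounded in terms of $\rho\bar x^2+(\bar x-\underline{x})^2$ and $\rho\underline{x}^2+(\bar x-\underline{x})^2$. This is the source of the ratio appearing in \eqref{cond2}.

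The Jacobian of $\psi$ is computed by the chain rule, $\partial\psi/\partial y=(\partial\psi/\partial x)\,(\partial x^*/\partial y)$. For the inner factor, differentiating $(L+YW)x^*=Yx(0)$ with respect to $y_k$ gives
\[
(L+YW)\,\partial_{y_k}x^*=e_k\big(x_k(0)-(Wx^*)_k\big).
\]
I then bound $\|(L+YW)^{-1}\|_\infty$. Uniformly in $y\in[0,1]^n$ this is the delicate point, since for $y=\boldsymbol{0}$ the matrix $L$ is singular. Two options are available. One is to restrict to a set where $y_i\ge\underline{y}>0$; the other is to use $\|(L+YW)^{-1}Y\|_\infty=1$ (row-stochasticity) after rewriting $e_k(\cdot)$ through $Y$, i.e. $\partial_{y_k}x^* = (L+YW)^{-1}Y e_k (x_k(0)-(Wx^*)_k)/y_k$. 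I would try to prove that $\psi$ maps $[0,1]^n$ into a box $[\underline{y},1]^n$ with $\underline{y}=\rho\underline{x}^2/(\rho\underline{x}^2+(\bar x-\underline{x})^2)$ and then work there. This explains the factor $(\rho\underline{x}^2+(\bar x-\underline{x})^2)^{-1}$ and an $M$-dependent term.

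For the outer factor, the same partial-derivative computation as in Theorem~\ref{th:general-eq} gives, writing $d_i=(Lx)_i$,
\[
\partial_{x}\psi_i=\frac{2\rho\big(x_i d_i^2\,e_i-x_i^2 d_i\,L_{i\cdot}\big)}{(\rho x_i^2+d_i^2)^2}.
\]
Its absolute row sum is at most $4\rho M^3/(\rho\underline{x}^2+\dots)^2$, using $\sum_k|L_{ik}|\le 2$, which yields the factor $4M^3/\underline{x}^2$ once one power of $\rho$ and of $\underline{x}^2$ is cancelled.

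Multiplying the two bounds should reproduce the left-hand side of \eqref{cond2}, so $\|\partial\psi/\partial y\|_\infty<1$ and $\psi$ is a contraction on the complete convex set $[\underline{y},1]^n$. Banach's theorem then gives a unique fixed point $y^*$, which is the unique equilibrium of \eqref{eq:dotx-slow}. Convergence of the trajectories for any $y(0)$ I would obtain by writing \eqref{eq:dotx-slow} as $\dot y_i=D_i(y)\big(\psi_i(y)-y_i\big)$ with $D_i=\beta(\rho x_i^{*2}+d_i^2)>0$, and using $V=\|y-y^*\|_\infty$. Together with the contraction property, this gives $D^+V\le -\min_i D_i\,(1-\kappa)V$.

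The main obstacle is the uniform bound on $\partial x^*/\partial y$ near the degenerate region of small $y$, together with justifying the lower bound on $|x_i^*|$. If the constants do not match \eqref{cond2} exactly, I would accept a bound of the same structure.
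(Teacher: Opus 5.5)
Your route is the paper's. You use the same fixed-point map (the paper's $\varphi$ in \eqref{eq:varphi}), an invariant box obtained from Proposition~\ref{prop:K}, the formula $\partial x^*/\partial y=(L+YW)^{-1}\mathrm{diag}(x(0)-Wx^*)$, the estimate $\|(L+YW)^{-1}\|_\infty\le 1/\min_i y_i$ on the box, and Banach's theorem.

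Two of your additions improve on the paper. First, you make explicit that the floor $|x_i^*(y)|\ge|\underline{x}|$ requires initial opinions of one sign (positive, if $\underline{x}$ is read literally); the paper needs this floor for $\varphi(\mathcal X_\varphi)\subseteq\mathcal X_\varphi$ but never states the sign assumption. Second, your Lyapunov step, $\dot y_i=\beta(\rho x_i^{*2}+d_i^2)(\psi_i(y)-y_i)$ with $V=\|y-y^*\|_\infty$, actually proves convergence of the ODE. The paper only shows that the Picard iterates $y_{n+1}=\varphi(y_n)$ converge, which says nothing about trajectories of \eqref{eq:dotx-slow}. To complete your step you still need three things:
\begin{itemize}
\item forward invariance of the box;
\item entry into the box from outside: put its lower face strictly below $\inf\psi_i$, so that $\psi_i-y_i$ is bounded away from $0$ there, and check the contraction on that slightly larger box;
\item exclusion of any $y(0)$ with $L+YW$ singular, e.g.\ $y(0)=\boldsymbol 0$.
\end{itemize}

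The genuine gap is your final step, ``multiplying the two bounds should reproduce \eqref{cond2}''. This is exactly the step the paper omits, and it cannot be filled in any case where \eqref{cond2} holds.

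Under your constant-sign reading, $0<\underline{x}\le\bar x=M$, so $\bar x-\underline{x}<M$. Then $4M^3/\underline{x}^2\ge4M$, the middle ratio is at least $1$, and $\rho\underline{x}^2+(\bar x-\underline{x})^2<(1+\rho)M^2$ makes the last factor exceed $1/M$. So the left-hand side of \eqref{cond2} is larger than $4$, and the hypothesis is never met.

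Read literally, \eqref{cond2} can hold only in two situations: negative data with $|\bar x|$ small compared with $|\underline{x}|$, or mixed signs. In both, $|x_i^*|\ge|\underline{x}|$ is false. If $x_i(0)=\bar x$ and $y_i=1$, then $x_i^*=\bar x$; with mixed signs, $x_i^*$ can vanish.

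Your outer factor also needs repair. With $A_i=\rho x_i^2+d_i^2$, its row sum is $2\rho|x_i||d_i|(|d_i|+2|x_i|)/A_i^2$. Crude bounds give $6\rho M^3$ in the numerator, not $4\rho M^3$. And because $d_i$ can vanish, $A_i$ is bounded below only by $\rho m^2$, where $m:=\min_j|x_j(0)|$, not by anything involving $(\bar x-\underline{x})^2$.

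What your argument honestly proves, for constant-sign data, is uniqueness and global convergence under $\sup\|\partial\psi/\partial y\|_\infty<1$ on the box built from $m$. That is a different hypothesis from \eqref{cond2}.
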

\begin{pf}
Define the implicit map $\varphi: [0,1] \to [0,1]$ such that at the equilibrium we can study the following fixed-point equation
\be\label{eq:varphi} y = \varphi(y) := \Big(\rho X^2+  \big(\mathrm{diag}(Lx^*(y))\big)^2\Big)^{-1} \rho X^2,\ee
where $X = \mathrm{diag}(x^*(y))$.
Note that from Proposition \ref{prop:K} and \eqref{eq:x-equilibrium}, it follows that, given the set
$$\mc X_{\varphi}:= \Bigg[ \frac{\rho \underline{x}^2}{\rho \bar x^2 + (\bar x-\underline{x})^2},1\Bigg]^n,$$
$\phi\left(\mc X_{\varphi}\right) \subseteq \mc X_{\varphi}.$
As done in Theorem \ref{th:general-eq}, in order to analyze existence and uniqueness of fixed points of $\varphi$, which correspond exactly to equilibria of the system \eqref{eq:dotx-slow}, we compute the Jacobian matrix of $\varphi$. Defining the following diagonal matrix $A(y) :=  \rho X^2+ \big(\mathrm{diag}(Lx^*(y))\big)^2$, we get 
{ \thinmuskip=0.2mu  
	\medmuskip=0.2mu   
	\begin{align*}
		\mspace{-5mu}&\frac{\partial \varphi}{\partial y}(y) =- \rho  A^{-1}(y) \frac{\partial A}{\partial y}(y) A^{-1}(y)  X^2  +2 \rho  A^{-1}(y) X \frac{\partial x^*}{\partial y}(y),
\end{align*}}
where 
{\thinmuskip=-0.2mu  \medmuskip=-0.2mu   
\begin{align*}
    \frac{\partial A}{\partial y}(y) &= 2\Big(\rho X + \mathrm{diag}(Lx^*(y)) L\Big) \frac{\partial x^*}{\partial y} (y),
\end{align*}}
and 
{\thinmuskip=0mu  \medmuskip=0mu   
    \begin{align}
    \mspace{-10mu} \frac{\partial x^*}{\partial y}(y) &= \big(L+ YW\big)^{-1} \left(\mspace{-5mu}- W \big(L+YW\big)^{-1} Y X(0) + X(0)\mspace{-5mu}\right) \nonumber\\
    &= \big(L+ Y W\big)^{-1}\mathrm{diag}\big(x(0) - W x^*(y)\big).
\end{align}}

Recalling that $W$ is row-stochastic, it follows that the infinity norm of $\big(L+ Y W\big)^{-1} $ satisfies  
\be \begin{aligned}\label{eq:norm-inf-inv}
	\big\| (L+ Y W)^{-1} \big\|_\infty &\leq \frac{1}{1-\left\|(I-Y)W\right\|_\infty } \\
	&= \frac{1}{1-(1-\min_i y_i)} \\
    &= \frac{\rho \bar x^2 + (\bar x-\underline{x})^2}{\rho \underline{x}^2}.
\end{aligned}\ee
We can now bound the infinity norm of the Jacobian matrix of $\varphi$ as follows, where for brevity, we omit some intermediate steps,
\begin{align*}
	\mspace{-5mu}\left\| \frac{\partial \varphi(y)}{\partial y} \right\|_\infty \mspace{-10mu}
	&\leq 4 \frac{M^3}{\underline{x}^2} \frac{\rho \bar{x}^2 + ( \bar x- \underline{x})^2}{\rho \underline{x}^2 + (  \bar x- \underline{x})^2}\left( \frac{M(2+\rho)}{\rho \underline{x}^2 + (  \bar x- \underline{x})^2} + 1\right)\! .
\end{align*}
Since \eqref{cond2} holds true, $\varphi$ is a contraction map. By the Banach fixed-point theorem \citep{kirk2001metric}, the contraction mapping $\varphi$ admits a unique fixed point $y^*$ in $\mc X_{\phi}$, and the sequence of iterates $y_{n+1} = \varphi(y_n)$ converges to $y^*$ for any initial condition $y(0)$. Since $y^*$ is a fixed point of $\varphi$, it corresponds to an equilibrium of the reduced dynamics \eqref{eq:dotx-slow}. Therefore, under condition \eqref{cond2}, the reduced system converges to the equilibrium point.\qed
\end{pf}

The assumption in Proposition~\ref{prop:slow2} is illustrated in Fig.~\ref{fig:admissible-region2}, where we show the set of admissible pairs $(\bar x, \rho)$ satisfying~\eqref{cond2} for the case in which $\underline x = -1$.
The plot illustrates how the (conservative) region ensuring existence and convergence of the dynamics in the slow-leadership regime increases as the ratio $\rho = \alpha/\beta$ increases, suggesting that convergence is easier to achieve when strong-opinionated individuals are more likely to emerge as leaders.

\begin{figure}
    \centering
    \includegraphics[width=0.65\linewidth]{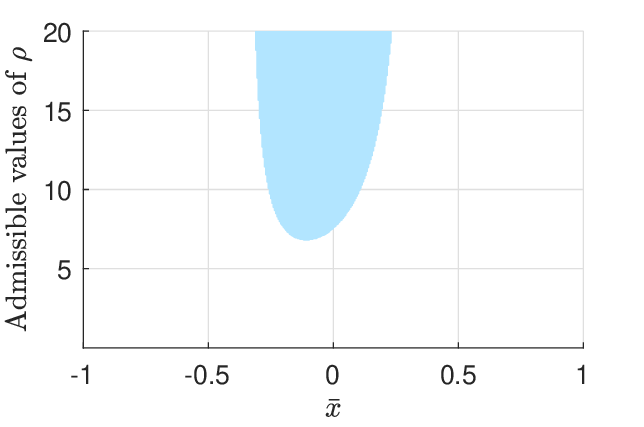}
    \caption{Region of admissible parameter values $(\bar x, \rho)$ for which
condition~\eqref{cond2} holds true.}
    \label{fig:admissible-region2}
\end{figure}

\section{Conclusion}\label{sec:conclusion}
This work introduces a novel modeling framework that captures the endogenous emergence of leadership in social networks through the coupled evolution of opinions. By embedding a dynamic leadership mechanism within a Friedkin-–Johnsen inspired opinion model, we emphasized a fundamental trade-off observed in social systems: strong and confidently expressed convictions can promote leadership emergence, yet excessive deviation from the prevailing social context tends to weaken influence due to reduced validation from peers.

We formally analyzed the resulting nonlinear dynamics and established sufficient conditions ensuring convergence to a nontrivial equilibrium point. In addition, we examined two relevant time-scale separation regimes, showing how the relative speed of opinion and leadership evolution affects the long-term social configuration.

The proposed framework offers several opportunities for further research. From a modeling perspective, one may consider alternative functional forms for the leadership dynamics, or simplifying assumptions such as restricting individual opinions to a positive domain, to derive stronger analytical guarantees. On the applied side, incorporating heterogeneous interaction patterns extracted from empirical data would allow validating the model and testing its predictive power in real social systems.
\bibliography{leaderness_bib.bib}

\end{document}